\newcommand{\algofont}[1]{\textnormal{\selectfont\sffamily#1}}
\newcommand{\NPclass}{\textnormal{\selectfont\sffamily NP}}
\newcommand{\algpairing}{{\algofont{Pairing}}}
\newcommand{\algreplace}{{\algofont{PairReplacement}}}
\newcommand{\algmain}{{\algofont{TtoG}}}
\newcommand{\constfont}[1]{\textnormal{\textsl{#1}}}
\newcommand{\constsfont}[1]{\textnormal{\textbf{\textsl{#1}}}}
\newcommand{\twodots}{\mathinner{\ldotp\ldotp}}
\newcommand{\Pref}[1]{(\hyperref[P#1]{P#1})}
\newcommand{\pocz}{{\constfont{start}}}
\newcommand{\pos}{{\constfont{newpos}}}
\newcommand{\myFalse}{{\constfont{false}}}
\newcommand{\myTrue}{{\constfont{true}}}
\newcommand{\Not}{{\constsfont{not}}}
\newcommand{\kon}{{\constfont{end}}}
\newcommand{\pair}{{\constfont{pair}}}
\newcommand{\lewy}{{\constfont{first}}}
\newcommand{\prawy}{{\constfont{second}}}
\newcommand{\none}{{\constfont{none}}}
\definecolor{myYellow}{rgb}{0.9,0.9,0}
\providecommand{\Ocomp}{{\mathcal O}}
\newtheorem{theorem}{Theorem}
\newtheorem{lemma}{Lemma}
\theoremstyle{definition}
\theoremstyle{remark}
\newcommand{\RePair}{\algofont{RePair}}
\newcommand{\algradix}{\algofont{RadixSort}}
\begin{document}

\title{A \emph{really} simple approximation of smallest grammar}

\author[A.\ Je\.z]
{Artur Je\.z}
\thanks{Supported by NCN grant number 2011/01/D/ST6/07164, 2011--2014.}
\address{
Max Planck Institute f\"ur Informatik,\\
Campus E1 4,  DE-66123 Saarbr\"ucken, Germany\\
\and
Institute of Computer Science, University of Wroc{\l}aw \\
ul.\ Joliot-Curie~15, 50-383 Wroc{\l}aw, Poland\\
\texttt{aje@cs.uni.wroc.pl}}

\begin{abstract}
In this paper we present a \emph{really} simple linear-time algorithm
constructing a context-free grammar of size $\Ocomp(g \log (N/g))$
for the input string, where $N$ is the size of the input string
and $g$ the size of the optimal grammar generating this string.
The algorithm works for arbitrary size alphabets,
but the running time is linear assuming that the alphabet $\Sigma$ of the input string
can be identified with numbers from $\{1,\ldots , N^c \}$ for some constant $c$.
Algorithms with such an approximation guarantee and running time are known,
however all of them were non-trivial and their analyses were involved.
The here presented algorithm computes the LZ77 factorisation and transforms it in phases to a grammar.
In each phase it maintains an LZ77-like factorisation of the word with at most $\ell$ factors
as well as additional $\Ocomp(\ell)$ letters, where $\ell$ was the size of the original LZ77 factorisation.
In one phase in a greedy way (by a left-to-right sweep and a help of the factorisation)
we choose a set of pairs of consecutive letters to be replaced with new symbols,
i.e.\ nonterminals of the constructed grammar.
We choose at least 2/3 of the letters in the word and there are $\Ocomp(\ell)$ many different pairs among them.
Hence there are $\Ocomp(\log N)$ phases, each of them introduces $\Ocomp(\ell)$ nonterminals to a grammar.
A more precise analysis yields a bound $\Ocomp(\ell \log(N/\ell))$.
As $\ell \leq g$, this yields the desired bound $\Ocomp(g \log(N/g))$.
\end{abstract}
\keywords{Grammar-based compression; Construction of the smallest grammar; SLP; compression}
	\maketitle

\section{Introduction}
	\label{sec:intro}

\subsection*{Grammar based compression}
In the grammar-based compression text is represented by a~context-free grammar (CFG) generating exactly one string.
Such an approach was first considered by Rubin~\cite{Rubin76},
though he did not mention CFGs explicitly.
In general, the idea behind this approach is that a CFG can compactly represent the structure of the text,
even if this structure is not apparent.
Furthermore, the natural hierarchical definition of the context-free grammars makes such a~representation suitable for algorithms,
in which case the string operations can be performed on the compressed representation,
without the need of the explicit decompression~\cite{GawryLZ,FCPM,SLPpierwszePM,PlandowskiSLPequivalence,RytterSWAT,SLPaprox2}.

While grammar-based compression was introduced with practical
purposes in mind and the paradigm was used in several implementations~\cite{RePair,KiefferY96,Sequitur},
it also turned out to be very useful in more theoretical considerations.
Intuitively, in many cases large data have relatively simple inductive definition,
which results in a grammar representation of small size. 
On the other hand, it was already mentioned that the hierarchical structure of the
CFGs allows operations directly on the compressed representation.
A recent survey by Lohrey\cite{Lohreysurvey} gives a comprehensive description
of several areas of theoretical computer science in which grammar-based compression was successfully applied.

The main drawback of the grammar-based compression is that producing the smallest CFG for a text is \emph{intractable}:
given a string $w$ and number $k$ it is {\selectfont\sffamily NP}-hard to decide whether there exist a CFG of size $k$ that generates $w$~\cite{SLPapproxNPhard}.
Furthermore,
the size of the smallest grammar for the input string cannot be approximated within some small constant factor~\cite{SLPaprox2}.

\subsection*{Previous approximation algorithms}
The first two algorithms with an approximation ratio $\Ocomp(\log(N/g))$ were developed simultaneously
by Rytter~\cite{SLPaprox} and Charikar et~al.~\cite{SLPaprox2}.
They followed a similar approach, we first present Rytter's approach as it is a bit easier to explain.

Rytter's algorithm~\cite{SLPaprox} applies the LZ77 compression to the input string and then transforms the
obtained LZ77 representation to an $\Ocomp(\ell \log(N/\ell))$ size grammar,
where $\ell$ is the size of the LZ77 representation.
It is easy to show that $\ell \leq g$ and as $f(x) = x \log(N/x)$ is increasing,
the bound $\Ocomp(g\log(N/g))$ on the size of the grammar follows
(and so a bound $\Ocomp(\log(N/g))$ on approximation ratio).
The crucial part of the construction is the requirement that the derivation tree of the intermediate constructed grammar
satisfies the AVL condition.
While enforcing this requirement is in fact easier than in the case of the AVL search trees (as the internal nodes do not store any data),
it remains involved and non-trivial.
Note that the final grammar for the input text is also AVL-balanced, which makes is suitable for later processing.

Charikar et~al.~\cite{SLPaprox2} followed more or less the same path,
with a different condition imposed on the grammar: it is required that its derivation tree is length-balanced,
i.e.\ for a rule $X \to YZ$ the lengths of words generated by $Y$ and $Z$ are within a certain 
multiplicative constant factor from each other.
For such trees efficient implementation of merging, splitting etc.\ operations were given (i.e.\ constructed from scratch)
by the authors and so the same running time as in the case of the AVL grammars was obtained.
Since all the operations are defined from scratch, the obtained algorithm is also quite involved and the analysis is even more non-trivial.

Sakamoto~\cite{SLPaproxSakamoto} proposed a different algorithm, based on \RePair~\cite{RePair},
which is one of the practically implemented and used algorithms for grammar-based compression.
His algorithm iteratively replaces pairs of different letters and maximal repetitions of letters
($a^\ell$ is a \emph{maximal repetition} if that cannot be extended by $a$ to either side).
A special pairing of the letters was devised, so that it is `synchronising':
if $u$ has $2$ disjoint occurrences in $w$, then those two occurrences can be represented
as $u_1u'u_2$, where $u_1,u_2 = \Ocomp(1)$, such that both occurrences of $u'$ in $w$
are paired and compressed in the same way.
The analysis was based on considering the LZ77 representation of the text and proving that due to
`synchronisation' the factors of LZ77 are compressed very similarly as the text to which they refer.
Constructing such a pairing is involved and the analysis non-trivial.

Recently, the author proposed another algorithm~\cite{grammar}.
Similarly to the Sakamoto's algorithm it iteratively applied two local replacement rules
(replacing pairs of different letters and replacing maximal repetitions of letters).
Though the choice of pairs to be replaced was simpler, still the construction was involved.
The main feature of the algorithm was its analysis based on the recompression technique,
which allowed avoiding the connection of SLPs and LZ77 compression.
This made it possible to generalise this approach also to grammars generating trees~\cite{treegrammar}.
On the downside, the analysis is quite complex.

\subsection*{Contribution of this paper}
We present a very simple algorithm together with a straightforward and natural analysis.
It chooses the pairs to be replaced in the word during a left-to-right sweep and additionally
using the information given by a LZ77 factorisation.
We require that any pair that is chosen to be replaced is either inside a factor of length at least $2$
or consists of two factors of length $1$ and that the factor is paired in the same way as its definition.
To this end we modify the LZ77 factorisation during the sweep.
After the choice, the pairs are replaced and the new word inherits the factorisation from the original word.
This procedure is repeated until a trivial word is obtained.
To see that this is indeed a grammar construction, when the pair $ab$ is replaced by $c$ we create a rule $c \to ab$.

\subsubsection*{Note on computational model}
The presented algorithm runs in linear time, assuming that we can compute the LZ77 factorisation in linear time.
This can be done assuming that the letters of the input words can be sorted in linear time,
which follows from a standard assumption that $\Sigma$ can be identified with a continues
subset of natural numbers of size $\Ocomp(N^c)$ for some constant $c$ and the \algradix{} can be performed on it.
Note that such an assumption is needed for all currently known linear-time algorithms
that attain the $\Ocomp(\log(N/g))$ approximation guarantee.

\section{Notions}
By $N$ we denote the size of the input word.

\subsection*{LZ77 factorisation}
An LZ77 factorisation (called simply factorisation in the rest of the paper)
of a word $w$ is a representation $w = f_1f_2\cdots f_\ell$, where each $f_i$
is either a single letter (called \emph{free letter} in the following)
or $f_i = w[j \twodots j + |f_i|-1]$ for some $j \leq |f_1\cdots f_{i-1}|$,
in such a case $f$ is called a \emph{factor} and $w[j \twodots j + |f_i|-1]$ is called the \emph{definition} of this factor.
We do not assume that a factor has more than one letter though when we find such a factor we demote it to a free letter.
The \emph{size} of the LZ77 factorisation $f_1f_2\cdots f_\ell$ is $\ell$.
There are several simple and efficient linear-time algorithms for computing the smallest LZ77 factorisation of a word~%
\cite{CrochemoreLZ77,PuglisiLZ77,CrochemoreLZ772,BannaiLZ77,KarkkainenLZ77,OhlebuschLZ77}
and all of them rely on linear-time algorithm for computing the suffix array~\cite{suffixarrays}.

\subsection*{SLP}
\emph{Straight Line Programme} (\emph{SLP}) is a CFG in the Chomsky normal form that generates a unique string.
Without loss of generality we assume that nonterminals of an SLP are $X_1,\ldots,X_g$,
each rule is either of the form $X_i \to a$ or $X_i \to X_jX_k$, where $j,k < i$.
The \emph{size} of the SLP is the number of its nonterminals (here: $g$).

The problem of finding smallest SLP generating the input word $w$ is \NPclass-hard~\cite{SLPapproxNPhard} and
the size of the smallest grammar for the input word cannot be approximated within some small constant factor~\cite{SLPaprox2}.
On the other hand, several algorithms with approximation ratio $\Ocomp(1 + \log(N/g))$, where $g$ is the size 
of the smallest grammar generating $w$, are known~\cite{SLPaprox2,SLPaprox,SLPaproxSakamoto,grammar}.
Most of those constructions use the inequality $\ell \leq g$,
where $\ell$ ($g$) is the size of the smallest LZ77 factorisation (grammar, respectively) for $w$~\cite{SLPaprox}.

\section{Intuition}
\subsection*{Pairing}
Relaxing the Chomsky normal form, let us identify each nonterminal generating a~single letter with this letter.
Suppose that we already have an SLP for $w$.
Consider the derivation tree for $w$ and the nodes that have only leaves as children
(they correspond to nonterminals that have only letters on the right-hand side).
Such nodes define a \emph{pairing} on $w$, in which each letter is paired with one of the neighbouring letters
(such pairing is of course a symmetric relation).
Construction of the grammar can be naturally identified with iterative pairing: for a word $w_i$ we find a pairing,
replace pairs of letters with `fresh' letters
(different occurrences of a pair $ab$ can be replaced with the same letter though this is not essential), obtaining $w_{i+1}$
and continue the process until a word $w_{i'}$ has only one letter.
The fresh letters from all pairings are the nonterminals of the constructed SLP
and its size is twice the number of different introduced letters.
Our algorithm will find one such pairing using the LZ77 factorisation of a word.

\begin{figure}
	\centering
		\includegraphics[scale=1.5]{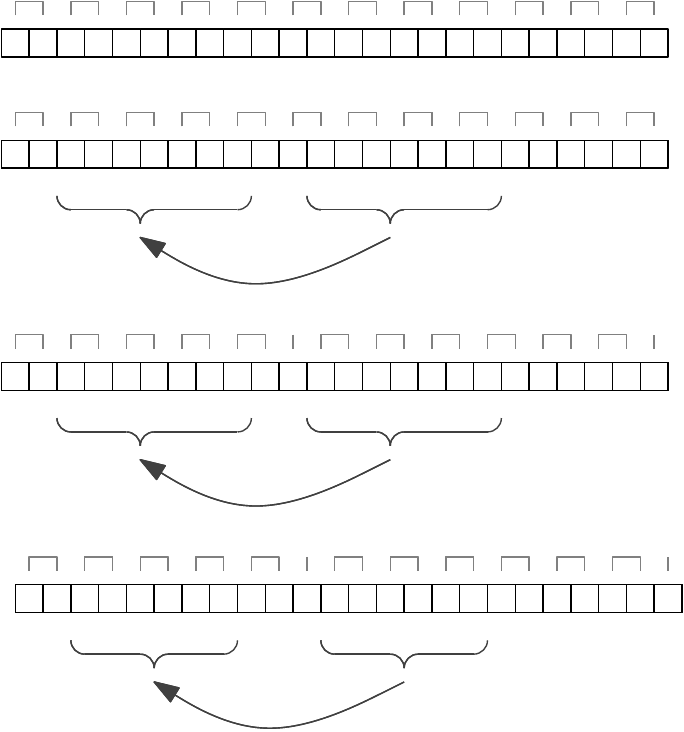}
	\caption{
	The pairings are presented over the words, the LZ77 factors are depicted below the words.
	On the top, the naive pairing is presented.
	On the second picture, the pairing is compared with the LZ77 factor as well as its definition; the factor and its definition are paired differently.
	On the third picture, we move the pairing so that it is consistent on the factor and its definition. This creates unpaired letters.
	On the bottom, we shorten the factor on the right, so that it ends with a whole pair.}
	\label{fig:pairing}
\end{figure}

\subsection*{Creating a pairing}
Suppose that we are given a word $w$ and know its factorisation.
We try the naive pairing: the first letter is paired with second, third with fourth and so on, see Fig.~\ref{fig:pairing}.
If we now replace all pairs with new letters, we get a word that is $2$ times shorter so $\log N$ such iterations
give an SLP for $w$.
However, in the worst case there are $|w|/2$ different pairs already in the first pairing and so we cannot give
any better bound on the grammar size than $\Ocomp(N)$.

A better estimation uses the LZ77 factorisation.
Let $w = f_1f_2\cdots f_\ell$ and consider a factor $f_i$. It is equal to $w[j\twodots j+|f_i|-1]$
and so all pairs occurring in $f_i$ already occur in $w[j\twodots j+|f_i|-1]$ unless the parity is different,
i.e.\ $j$ and $|f_1\cdots f_{i-1}|+1$ are of different parity, see Fig.~\ref{fig:pairing}.
We want to fix this: it seems a bad idea to change the pairing in $w[j\twodots j+|f_i|-1]$,
so we change it in $f_i$: it is enough to shift the pairing by one letter,
i.e.\ leave the first letter of $f_i$ unpaired and pair the rest as in $w[j+1\twodots j+|f_i|-1]$.
Note that the last letter in the factor definition may be paired with the letter to the right,
which may be impossible inside $f_i$, see Fig.~\ref{fig:pairing}.
As a last observation note that since we alter each  $f_i$,
instead of creating a pairing at the beginning and modifying it we can create the pairing while scanning the word from left to right.

There is one issue: after the pairing we want to replace the pairs with fresh letters.
This may make some of the the factor definitions improper:
when $f_i$ is defined as $w[j \twodots j + |f_i|-1]$ it might be that $w[j]$ is paired with letter to the left.
To avoid this situation, we replace the factor $f_i$ with $w[j]f_i'$ and change its definition to $w[j+1 \twodots j + |f_i|-1]$. 
Similar operation may be needed at the end of the factor, see Fig.~\ref{fig:pairing}.
This increases the size of the LZ77 factorisation, but the number of factors stays the same
(i.e.\ only the number of free letters increases).
Additionally, we pair two neighbouring free letters, whenever this is possible.

\subsection*{Using a pairing}
When the appropriate pairing is created, we replace each pair with a new letter. If the pair is within a factor,
we replace it with the same symbol as the corresponding pair in the definition of the factor.
In this way only pairs that are formed from free letters may contribute a fresh letter.
As a result we obtain a new word together with a factorisation in which there are $\ell$ factors.

\subsection*{Analysis}
The analysis is based on the observation that a factor $f_i$ is shortened by a constant fraction,
so it takes part in $\log |f_i|$ phases and in each of them it introduces $\Ocomp(1)$ free letters.
Hence the total number of free letters introduced to the word is
$\Ocomp(\sum_{j = 1}^\ell \log |f_i|) = \Ocomp(\ell \log (N/\ell))$ (which is shown in details later on).
As creation of a rule decreases the number of free letters in the instance by at least $1$,
we obtain that this is also an upper bound on the size of the grammar.

\section{The algorithm}
\subsection*{Stored data}
The word is represented as a table of letters.
The table $\pocz$ stores the information about the beginnings of factors:
$\pocz[i] = j$ means that $i$ is the first letter of a factor and $j$ is the first letter of its definition;
otherwise $\pocz[i] = \myFalse$.
Similarly $\kon$ stores the information about the ends of factors:
$\kon[i]$ is a bit flag, i.e.\ has value $\myTrue/\myFalse$, that tells whether $w[i]$ is the last letter of a factor.

When we replace the pairs with new letters, we reuse the same tables, overwriting from left to the right.
Additionally, a table $\pos$ stores the corresponding positions: $\pos[i] = j$ means that 
\begin{itemize}
	\item the letter on position $i$ was unpaired and $j$ is the position of the corresponding letter in the new word
	\textit{or}
	\item the letter on position $i$ was paired with a letter to the right and the corresponding letter in the new word is on position $j$.
\end{itemize}
Lastly, $\pos[i]$ is not defined when position $i$ was the second element in the pair.

It is easy to see that the algorithm can be converted to lists and pointers instead of tables.
(Though the \algradix{} used in the LZ77 construction needs tables).

\subsection*{Technical assumption}
Our algorithm makes a technical assumptions:
a factor $f_i$ (of length at least $2$) starting at position $j$ cannot have $\pocz[j] = j-1$,
i.e.\ its definition is at least two positions to the left.
This is verified and repaired while sweeping through $w$:
if $w[j \twodots j + |f_i|-1] = w[j-1 \twodots j + |f_i|-2]$ then $f_i = a^{|f_i|}$ for some letter $a$.
In such a case we split $f_i$: we make $w[j]$ a free letter and set $\pocz[j+1] = j-1$
(note that the latter essentially requires that indeed $f_i$ is a repetition of one letter).

\subsection*{Pairing}
We are going to devise a pairing with the following
(a bit technical) properties:
\begin{enumerate}[(P1)]
	\item \label{P1}there are no two consecutive letters that are both unpaired;
	\item \label{P2}the first two letters (last two letters) of any factor $f$ are paired with each other;
	\item \label{P3}if $f = w[i \twodots i + |f| - 1]$ has a definition $w[\pocz[i]\twodots\pocz[i]+|f|-1]$
	then letters in $f$ and in $w[\pocz[i]\twodots\pocz[i]+|f|-1]$ are paired in the same way.
\end{enumerate}
The pairing is found incrementally by a left-to-right scan through $w$:
we read $w$ and when we are at letter $i$ we make sure that the word $w[1\twodots i]$ satisfies \Pref{1}--\Pref{3}.
To this end we not only devise the pairing but also modify the factorisation a bit
(by replacing a factor $f$ with $af$ or by $fb$, where $a$ is the first and $b$ the last letter of $f$).
If during the sweep some $f$ is shortened so that $|f| = 1$ then we demote it to a free letter.

The pairing is recorded in table: $\pair[i]$ can be set to $\lewy$, $\prawy$ or $\none$,
meaning that $w[i]$ is the first, second  in the pair or it is unpaired, respectively.

\begin{algorithm}[h]
	\caption{\algpairing}
	\label{alg: pairing}
	\begin{algorithmic}[1]
		\State $\pair[1] \gets \none$ 
		\State $i \gets 2$
		\While{$i \leq |w|$}
			\If{$\pocz[i]$} \Comment{$w[i]$ is the first element of a factor}
				\If{$\kon[i]$} \Comment{This is one-letter factor} \label{ensure non-trivial}
					\State $\pocz[i] \gets \kon[i] \gets \myFalse$ \Comment{Turn it into a free letter}
				\ElsIf{$\pocz[i] = i-1$} \Comment{The factor is $a^k$, its definition begins one position to the left} \label{blok}
					\State $\pocz[i+1] \gets i-1$			\Comment{Move the definition of the factor}
					\State $\pocz[i] \gets \myFalse$  \Comment{Make $w[i]$ a free letter}
					\label{blok pocz}
				\ElsIf{$\pair[\pocz[i]] \neq \lewy$} \Comment{The pairing of the definition of factor is bad} \label{lewy koniec}
					\State $\pocz[i+1] \gets \pocz[i] + 1$ \Comment{Shorten the factor definition} \label{popping left}
					\State $\pocz[i] \gets \myFalse$	\Comment{Make $w[i]$ a free letter}				
				\Else \Comment{Good factor}
					\State{$j\gets \pocz[i]$} \Comment{Factor's definition begins at $j$} 
					\Repeat \Comment{Copy the pairing from the factor definition} \label{kopiowanie parowania}
						\State $\pair[i] \gets \pair[j]$ \label{copy pairing}
						\State $i \gets i+1$, $j \gets j+1$
					\Until{$\kon[i-1]$}
					\While{$\pair[i-1] \neq \prawy$} \Comment{Looking for a new end of the factor} \label{prawy koniec}
						\State $i \gets i - 1$ \label{popping right}
						\State $\kon[i-1] = \myTrue$, $\kon[i] = \myFalse$  \Comment{Shortening of the factor}	
						\State $\pair[i] = \none$ \Comment{Clear the pairing}
					\EndWhile
				\EndIf				
			\EndIf \label{end of factor}
			\If{$\Not$ $\pocz[i]$} \Comment{$w[i]$ a free letter} \label{pairing trivial factors}
				\If{$\pair[i-1] = \none$} \Comment{If previous letter is not paired}
					\State $\pair[i-1] \gets \lewy$, $\pair[i] \gets \prawy$ \Comment{Pair them} \label{pairing trivial factors 2}
				\Else
					\State $\pair[i] \gets \none$ \Comment{Leave the letter unpaired}
				\EndIf
				\State $i \gets i+1$
			\EndIf
		\EndWhile
	\end{algorithmic}
\end{algorithm}

\subsection*{Creation of pairing}
We read $w$ from left to right, suppose that we are at position $i$.
If we read $w[i]$ that is a free letter then we check, whether the previous letter is not paired.
If so, then we pair them. Otherwise we continue to the next position.

If $i$ is a first letter of a factor, we first check whether the length of this factor is one;
if so, we change $w[i]$ into a free letter.
If the factor has definition only one position to the left (i.e.\ at $i-1$) then we split the factor:
we make $w[i]$ a free letter and set $w[i+1]$ as a first letter of a factor with a~definition starting at $i-1$.
Otherwise we check whether $w[\pocz[i]]$ is indeed the first letter of a pair.
If not (i.e.\ it is a second letter of a pair or an unpaired letter) then we split the factor:
we make $w[i]$ a free letter and $w[i+1]$ the beginning of a factor with a definition beginning at $\pocz[i]+1$
(note that this factor may have length $1$);
we view the factor beginning at $w[i+1]$ as a modified factor that used to begin at $w[i]$.
If for any reason we turned $w[i]$ into a free letter, we re-read this letter, treating it accordingly.
If $w[\pocz[i]]$ is a first letter of a pair, we copy the pairing from the whole factor's definition to the factor starting at $i$.

When this is done we need to make sure that the factor indeed ends with a pair, i.e.\ that \Pref{2} holds:
if the last letter of a factor, say $w[i']$, is not the second in the pair.
To this end we split the factor: we make $w[i']$ a free letter, clear $i'$'s pairing, decrease $i'$ by $1$ and set the flag for $w[i']$
(making it the end of the new factor).
We iterate it until the $w[i']$ is indeed a second letter of a factor.
This is all formalised in Algorithm~\ref{alg: pairing}.

\begin{algorithm}[h]
	\caption{\algreplace}
	\label{alg: replace}
	\begin{algorithmic}[1]
		\State $i \gets i' \gets 1$ \Comment{$i'$ is the position corresponding to $i$ in the new word}
		\While{$i \leq |w|$}
			\If{$\pocz[i]$} \Comment{$w[i]$ is the first element of a factor}
				\State $\pocz[i'] \gets j' \gets \pos[\pocz[i]]$ \Comment{Factor in new word begins at the position corresponding to the beginning of the current factor}
				\State $\pocz[i] \gets \myFalse$ \Comment{Clearing obsolete information}
				\Repeat
					\State $\pos[i] \gets i'$ \Comment{Position corresponding to $i$}
					\State $w[i'] \gets w[j']$ \Comment{Copy the letter according to new factorisation}
					\State $i' \gets i'+1$, $j' \gets j'+1$
					\If{$\pair[i] = \lewy$} 
						\State $i \gets i+2$ \Comment{We move left by the whole pair}
					\Else
						\State $i \gets i+1$ \Comment{We move left by the unpaired letter}
					\EndIf
				\Until{$\kon[i-1]$} \Comment{We processed the whole factor}
				\State $\kon[i'-1] \gets \myTrue$ \Comment{End in the new word}
				\State $\kon[i-1] \gets \myFalse$ \Comment{Clearing obsolete information}
			\EndIf
			\If{$\Not$ $\pocz[i]$} \Comment{$w[i]$ a free letter}
				\State $\pos[i] \gets i'$
				\If{$\pair[i] = \none$} 
					\State $w[i'] \gets w[i]$ \Comment{We copy the unpaired letter}
					\State $i \gets i+1$, $i' \gets i'+1$ \Comment{We move by this letter to the right}
				\Else
					\State $w[i'] \gets $ fresh letter \Comment{Paired free letters are replaced by a fresh letter}
					\State $i \gets i+2$, $i' \gets i'+1$ \Comment{We move to the right by the whole pair}
				\EndIf
			\EndIf
		\EndWhile
	\end{algorithmic}
\end{algorithm}

\subsubsection*{Using the pairing}
When the pairing is done, we read the word $w$ again (from left to right) and replace the pairs by letters.
We keep two indices: $i$, which is the pointer in the current word (pointing at the first unread letter)
and $i'$, which is a pointer in the new word, pointing at the first free position.
Additionally, when reading $i$ we store (in $\pos[i]$) the position of the corresponding letter in the new word, which is always $i'$.

If $w[i]$ is a first letter in a pair and this pair consists of two free letters,
in the new word we add a fresh letter and move two letters to the right in $w$ (as well as one position in the new word).
If $w[i]$ is unpaired and a free letter then we simply copy this letter to the new word, increasing both $i$ and $i'$ by $1$.
If $w[i]$ is first letter of a factor (and so also a first letter of a pair by \Pref{2}),
we copy the corresponding fragment of the new word (the first position is given by $\pos[\pocz[i]]$),
moving $i$ and $i'$ in parallel: $i'$ is always incremented by $1$,
while $i$ is moved by $2$ when it reads a first letter of a pair and by $1$ when it reads a free letter.
Also, we store the new beginning and end of the factor in the new word:
for a factor beginning at $i$ and ending at $i'$ 
we set $\pocz[\pos[i]] = \pos[\pocz[i]]$ and 
$\kon[\pos[i'-1]] = \myTrue$ (note that $i'-1$ and $i'$ are paired).
Details are given in Algorithm~\ref{alg: replace}.

\begin{algorithm}[h]
	\caption{\algmain}
	\label{alg: main}
	\begin{algorithmic}[1]
		\State compute LZ77 factorisation of $w$
		\While{$|w| > 1$}
			\State compute a pairing of $w$ using \algpairing
			\State replace the pairs using \algreplace 
		\EndWhile
		\State output the constructed grammar
	\end{algorithmic}
\end{algorithm}

\subsection*{Algorithm} \algmain{} first computes the LZ77 factorisation and then iteratively applies \algpairing{} and \algreplace,
until a one-word letter is obtained.

\section{Analysis}

We begin the analysis with showing that indeed \algpairing{} produces the desired pairing.

\begin{lemma}
\label{lem: main}
\algpairing{} runs in linear time. It creates a proper factorisation and returns a pairing that satisfies \Pref{1}--\Pref{3}
(for this new factorisation).
When the current factorisation for the input word for \algpairing{} has $m$ factors
then \algpairing{} creates at most $6m$ new free letters and the returned pairing has at most $m$ factors.
\end{lemma}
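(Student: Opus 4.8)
The plan is to derive all of the claims — linear running time, properness of the output factorisation, that the output pairing satisfies \Pref{1}--\Pref{3}, the bound $6m$ on new free letters, and the bound $m$ on the number of factors — from a single loop invariant maintained by the left-to-right sweep (equivalently, by induction on the pointer $i$). The invariant I would use, stated at the top of the \textbf{while} loop with current value $i$, is: the prefix $w[1\twodots i-1]$ with the current tables $\pocz,\kon$ is a proper factorisation (every factor inside has length at least $2$, its definition lies at least two positions further left, and that definition is an earlier occurrence of the factor), the recorded pairing $\pair$ on $w[1\twodots i-1]$ satisfies \Pref{2} and \Pref{3}, and it satisfies \Pref{1} except that $\pair[i-1]$ may momentarily equal $\none$ (resolved the moment $w[i]$ is read, should $w[i]$ turn out to be a free letter). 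I would then check that each branch of the body preserves this invariant: reading a free letter (pairing it with its predecessor exactly when that predecessor is unpaired), demoting a one-letter factor, the ``blok''-split of a run $a^k$, a left-peel, and the good-factor branch (copy the definition's pairing, then pop letters off the right end). Two of these verifications need real care. (a) After a split or a left-peel the new factor's definition must still be a valid earlier occurrence lying at least two positions to the left; this holds because the case $\pocz[i]=i-1$ is dealt with first, so one peel moves the definition pointer only to $\pocz[i]+1\le i-1$, and the required string equality is just the old factor equality with both windows shifted by one. (b) In the good-factor branch the copied pairing must satisfy \Pref{1}--\Pref{3} even when $f$ overlaps its own definition, so that the copy loop reads positions it has itself just written; here I would use that the branch is entered only when $\pair[\pocz[i]]=\lewy$, so the definition — hence the copy — begins with a full pair $\lewy\prawy$, and in the overlapping case the copy is periodic and at each wrap places a $\lewy$ right after some letter, so no two consecutive $\none$'s appear and \Pref{3} holds in its natural, possibly self-referential, reading.

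The technical heart, which feeds both \Pref{2} and the counting, is a structural fact about every pairing the algorithm records on the already-processed prefix: it consists of blocks $\lewy\prawy$ and isolated $\none$'s with no two $\none$'s adjacent, so reading from the right a $\none$ is preceded by a $\prawy$, a $\lewy$ by a $\prawy$ or a $\none$, and a $\prawy$ by a $\lewy$. Applied to the tail of a freshly copied factor, this shows the right-popping \textbf{while} loop stops after at most two steps and — since the factor begins with the pair $\lewy\prawy$ — never shrinks it below length $2$, so \Pref{2} is restored at the cost of at most two new free letters. Applied to the definition, which lies in the (structurally clean) already-processed prefix, it shows that after at most two left-peels the definition pointer lands on a $\lewy$ and the factor enters the good branch; hence each original factor triggers at most two left-peels.

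With this in hand the rest is bookkeeping. For the running time, the pointer value at the start of an iteration strictly increases between consecutive iterations (a good factor of length $\ell$ advances the pointer by $\ell$ and the at most two right-pops retract it by at most $2\le\ell-1$), so there are at most $|w|$ iterations; each does $\Ocomp(1)$ work except a good-factor iteration, whose copy loop costs $\Ocomp(\ell)$, and $\sum_f\ell_f\le|w|$, giving total time $\Ocomp(|w|)$. For the number of factors, no branch ever creates a factor — a factor is only demoted to a free letter, split into a free letter plus one shorter factor, or shortened — so each of the $m$ input factors produces at most one output factor. For the new free letters, a length-$\ge 2$ input factor can produce them only through at most one ``blok'' split, at most two left-peels, at most one demotion of the residual length-$1$ factor, and at most two right-pops, at most $6$ altogether, while a length-$1$ input factor produces at most one; so at most $6m$ in total.

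The step I expect to be the main obstacle is care-point (b): the good-factor branch when $f$ overlaps its own definition, where the pairing is copied from positions written earlier in the same loop. One has to argue that this periodic process is well defined and yields a pairing of $f$ satisfying \Pref{1}--\Pref{3}, and — equally delicately — that after any subsequent right-pop the shortened factor still has a valid, earlier-occurring definition and is still paired in exactly the same way as that definition. Keeping track of precisely which $\pair$-entries are already final when the algorithm inspects them — the predecessor $\pair[i-1]$ at a factor start, and the definition pointer during repeated peeling — is the bookkeeping that the invariant has to be strong enough to carry.
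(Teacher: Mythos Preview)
Your plan is correct and follows essentially the same route as the paper: a left-to-right invariant on the processed prefix, a case-by-case check that each branch preserves it, and a per-factor tally of at most $6$ new free letters (the paper also breaks this down as $1+1+2+2=6$, matching your $1+2+1+2$). Two small points of comparison: first, your invariant is actually tighter than the paper's --- the paper allows the partial pairing to have $\pair[i-1]=\lewy$ at the top of the loop, whereas you (correctly) observe $\pair[i-1]\in\{\prawy,\none\}$, which is exactly what makes your periodic-extension argument in the overlapping case go through cleanly; second, the paper essentially glosses over your concern (b): for the right-popping it only argues that the factor cannot shrink below two letters (using that positions $\pocz[i]$ and $\pocz[i]+1$ lie strictly left of $i$ and are $\lewy,\prawy$), and for the ``at most two right-pops'' it merely says ``similar analysis'' to the left-peel case without discussing overlap. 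Your structural observation --- that the copied pairing is a periodic extension of a base block beginning $\lewy\prawy$ and ending in $\{\prawy,\none\}$, hence still of the form ``$\lewy\prawy$-blocks and isolated $\none$'s'' --- is a cleaner way to get the same bound and makes the overlapping case explicit rather than implicit.
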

\begin{proof}
For the running time analysis note that a single letter can be considered at most twice:
once as a part of a factor and once as a free letter.

We show the second claim of the lemma by induction:
at all time the stored factorisation is proper, furthermore, 
when we processed $w[1\twodots i-1]$ (i.e.\ we are at position $i$,
note that we can go back in which case position gets unprocessed)
then we have a partial pairing on $w[1\twodots i-1]$,
which differs from the pairing only in the fact that the position $i-1$ may be assigned as first in the pair
and $i$ is not yet paired.
This partial pairing satisfies \Pref{1}--\Pref{3} restricted to $w[1 \twodots i-1]$.

\subsubsection*{Factorisation}
We first show that after considering $i$ the modified factorisation is proper.

If in line~\ref{ensure non-trivial} we have $\pocz[i] = \kon[i]$ then $w[i]$ is a one-letter factor and so after replacing it with a free letter
the factorisation stays proper.
Observe now that the verification in line~\ref{ensure non-trivial} ensures that in each other case considered
in lines \ref{blok}--\ref{end of factor} we deal with factors of length at least $2$.

The modifications of the factorisation in line~\ref{blok pocz} results in a proper factorisation:
the change is applied only when $\pocz[i] = i-1$, in which case $w[i\twodots i + |f| - 1] = w[i - 1 \twodots i + |f| - 2]$,
which implies that $f = a^{|f|}$, where $a = w[i]$.
Since $|f| \geq 2$ (by case assumption), in such a case $w[i+1\twodots i + |f| - 1] = w[i-1\twodots i + |f| - 3]$ so we can split the factor $f = w[i\twodots i + |f| - 1]$
to $w[i]$ and a factor $w[i+1\twodots i + |f| - 1]$ which is defined as $w[i - 1 \twodots i + |f| - 3]$
($f$ had at least two letters, so after the modification it has at least $1$ letter).

In line~\ref{popping left} we shorten the factor by one letter (and create a free letter),
as the factor had at least two letters, so the factorisation remains proper.

Concerning the symmetric shortening in line~\ref{popping right}, it
leaves a proper factorisation (as in case of line~\ref{popping left}),
as long as we do not move $i$ before the beginning of the factor.
However, observe that when we reach line~\ref{copy pairing} this means that the factor beginning at $i'$ has length at least $2$,
$\pocz[i'] < i'-1$ and $\pair[\pocz[i']] = \lewy$.
Thus $\pocz[i']+1 < i'$ and so by induction assumption we already made a pairing for it. Since $\pocz[i']$ is assigned $\lewy$,
$\pocz[i']+1$ is assigned $\prawy$.
So $i'+1$ is assigned $\prawy$ as well. Since the end of the factor is at position $i \geq i'+1$,
in our search for element marked with \prawy{} at positions $i$, $i-1$, \ldots we cannot move to the left more than to $i'+1$.
Thus the factor remains (and has at least $2$ letters).

\subsubsection*{Pairing}
We show that indeed we have a partial pairing.
Firstly, if $i$ is decreased, then as a result we get a partial pairing:
the only nontrivial case is when $i-1$ and $i$ were paired then $i-1$ is assigned as the first element in the pair but it has
no corresponding element, which is allowed in the partial pairing.
If $i$ is increased then we need to make sure if $i-1$ is assigned as a first element in a pair then $i$ will be assigned as the second one
(or the pairing is cleared).
Note that $i-1$ can be assigned in this way only when it is part of the factor, i.e.\ it gets the same status as some $j$.
If $i$ is also part of the same factor, then it is assigned the status of $j+1$, which by inductive assumption is paired with $j$,
so is the second element in the pair.
In the remaining case, if $i-1$ was the last element of the factor then in loop in line~\ref{prawy koniec} we decrease $i$ and so unprocess $i-1$
(in particular, we clear its pairing).

For \Pref{2} observe that for the first two letters it is explicitly verified in line~\ref{lewy koniec}.
Similarly, for the second part of \Pref{2}: we shorten the last factor in line~\ref{prawy koniec}
(ending at $i$) until $\pair[i] = \prawy$.
We already shown that pairing is defined for $w[1\ldots i]$ and when $i$ is assigned $\prawy$ then $i-1$ is assigned $\lewy$, as claimed.

Condition \Pref{3} is explicitly enforced in loop in line~\ref{kopiowanie parowania},
in which we copy the pairing from the definition of the factor.

Suppose that \Pref{1} does not hold for $i-1,i$, i.e.\ they are both unpaired after processing $i$.
It cannot be that they are both within the same factor,
as then the corresponding $w[j-1]$ and $w[j]$ in the definition of the factor are also unpaired, by \Pref{3},
which contradicts the induction assumption.
Similarly, it cannot be that one of them is in a factor and the other outside this factor,
as by \Pref{2} (which holds for $w[1\twodots i]$) a factor begins and ends with two paired letters.
So they are both free letters.
But then we needed to pass line~\ref{pairing trivial factors} for $i$ and both $w[i-1]$ and $w[i]$ were free at that time,
which means that they should have been paired at that point, contradiction.

To see the third claim of the lemma, i.e.\ the bound on the number of new free letters,
fix a factor $f$ that begins at position $i$.
When it is modified, we identify the obtained factor with $f$
(which in particular shows that the number of factors does not increase).
We show that it creates at most $6$ new free letters in this phase.

If at any point $\pocz[i]$ and $\kon[i]$, i.e.\ the factor has only one letter,
then it is replaced with a free letter and afterwards cannot introduce any free letters (as $f$ is no longer there).
Hence at most one free letter is introduced by $f$ due to condition in line~\ref{ensure non-trivial}.

If $\pocz[i] = i-1$ then it creates one free letter inside condition in line~\ref{blok}.
It cannot introduce another free letter in this way (in this phase),
as afterwards $\pocz[i+1] = i-1$ and there is no way to decrease this distance (in this phase).

We show that condition in line~\ref{lewy koniec}, i.e.\ that the first letter of the factor definition is not the first in the pair,
holds at most twice for a fixed factor $f$ in a phase.
Since we set $j = \pocz[i]$ and increase both $i$ and $j$ by $1$ until $\pair[j] = \lewy$,
this can be viewed as searching for the smallest position $j' \geq j$ that is first in a pair
and we claim that $j' \leq j+2$.
On the high-level, this should hold because \Pref{1} holds for $w[1 \twodots i-1]$,
and so among three consecutive letters there is at least one that is the first element in the pair.
However, the situation is a bit more complicated, as some pairing may change during the search.%
\footnote{In particular, this could fail if we allowed that $\pocz[i] = i-1$, so some care is needed.}

Consider first the main case, in which $j \leq i-3$. Then the elements at position $j$, $j+1$, $j+2 \leq i-1$
have already assigned pairings and so at least one of them is assigned as a first element of some pair.
The only way to change the pairing from $\lewy$ to some other is in loop in line~\ref{prawy koniec}.
However, we can go to this loop only after condition in line~\ref{lewy koniec} fails,
which implies that it holds at most twice (i.e.\ for at most two other among $j$, $j+1$, $j+2$).

As the case in which $\pocz[i] = i-1$ is excluded by the case assumption of the algorithm, the remaining case is $j = i-2$.
As in the previous argument, we consider the letters whose pairing are known, i.e.\ $w[i-2]$ and $w[i-1]$.
If any of them is a first letter in a pair, we are done (as in the previous case).
As \Pref{1} holds for them, the only remaining possibility is that $w[i-2]$ is a second letter in a pair and $w[i-1]$ is unpaired.
Then when we consider $i$ in line~\ref{lewy koniec} it is made free.
When we consider $w[i]$ in line~\ref{pairing trivial factors} (re-reading it as a free letter), it is paired with $i-1$.
Hence when we read $i+1$ (the new first letter in the factor) its definition ($i-1$) is a first letter in a pair, as claimed.

Similar analysis can be applied to the last letter of a factor.
So, as claimed, one factor can introduce at most $6$ free letters in a single phase.

Finally, it is left to show that when we processed the whole $w$ then we have a proper factorisation and a pairing satisfying \Pref{1}--\Pref{3}.
From the inductive proof it follows that the kept factorisation is proper and the partial pairing satisfies \Pref{1}--\Pref{3} for the whole word.
So it is enough to show that this partial pairing is a pairing, i.e.\ that the last letter of $w$ is not assigned as a first element of a pair.
Consider, whether it is in a factor or a free letter.
If it is in a factor then clearly it is the last element of the factor and so by \Pref{2} it is the second element in a pair.
If it is a free letter observe that we only pair free letters in line~\ref{pairing trivial factors 2},
which means that it is paired with the letter on the next position, contradiction.
\qedhere
\end{proof}

Now, we show that when we have a pairing satisfying \Pref{1}--\Pref{3}
(so in particular the one provided by \algpairing{} is fine, but it can be any other pairing satisfying \Pref{1}--\Pref{3})
then \algreplace{} creates a word $w'$ out of $w$ together with a factorisation.

\begin{lemma}
\label{lem: pairing and trivial}
When a pairing satisfies \Pref{1}--\Pref{3} then \algreplace{} runs in linear time and returns a word $w'$ together with a factorisation;
$|w'| \leq \frac{2|w|+1}{3}$ and the factorisation of $w'$ has the same number of factors as the factorisation of $w$.
If $p$ fresh letters were introduced then $w'$ has $p$ less free letters than $w$.
\end{lemma}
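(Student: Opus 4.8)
The plan is to verify each of the four assertions of Lemma~\ref{lem: pairing and trivial} by a single left-to-right walk through \algreplace, mirroring the structure of the algorithm.

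\textbf{Running time.} First I would argue the linear bound: the pointer $i$ only ever moves forward (by $1$ or $2$), and each position is processed exactly once, either as a free letter or as a position inside a factor, so the total work is $\Ocomp(|w|)$. The table lookups $\pos[\pocz[i]]$ are $\Ocomp(1)$ because by the time we reach the beginning of a factor we have already processed its definition (a factor's definition starts strictly to the left, and by the technical assumption in fact at least two positions to the left), so $\pos$ is already defined there.

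\textbf{Correctness of the output factorisation.} Next I would show that $w'$ together with the produced $\pocz,\kon$ tables is a proper factorisation with the same number of factors. The key point is condition \Pref{3}: inside a factor $f=w[\pocz[i]\twodots\pocz[i]+|f|-1]$ the pairing agrees with the pairing of its definition, so when we replace pairs by letters (reusing the same fresh/old letter for identical pairs in a factor and in its definition, which \algreplace{} does by copying $w[i']\gets w[j']$), the image of $f$ in $w'$ is literally a contiguous substring of $w'$ that already occurred as the image of the definition. Thus each factor of $w$ maps to exactly one factor of $w'$, and free letters map to free letters; hence the number of factors is preserved. I would also note that \Pref{2} guarantees a factor begins and ends with a whole pair, so the new endpoints $\pos[i]$ and $\pos[i'-1]$ are well-defined and the copied block is exactly the image of the definition — no half-pair straddling a factor boundary. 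One has to check that $\pos[\pocz[i]]$ points at the \emph{beginning} of the image of the definition; this follows because the first letter of the definition is first in a pair (again \Pref{2}, since by the technical assumption the definition of a length-$\geq 2$ factor does not start at $i-1$, so it is itself the start of a factor or — after \algpairing{} — we may invoke \Pref{1}/\Pref{2} as appropriate), so $\pos$ is defined there and equals the position in $w'$ of that pair's image.

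\textbf{Length bound.} For $|w'|\le \frac{2|w|+1}{3}$ I would use \Pref{1}: among every three consecutive positions of $w$ at least one is the first element of a pair, so the number of unpaired positions $u$ and the number of pairs $q$ satisfy $u+2q=|w|$ and, since no two unpaired positions are adjacent, $u\le q+1$. Then $|w'|=u+q=\frac{(u+2q)+(u)}{?}$ — more cleanly, $|w'|=u+q$ and from $u\le q+1$ we get $3|w'|=3u+3q=(u+2q)+(2u+q)\le |w|+(2(q+1)+q)=|w|+3q+2$, and using $q\le (|w|)/2$... I would instead just argue directly: $|w|=u+2q\ge (q-1)+2q=3q-1$ hence $q\le (|w|+1)/3$, and $|w'|=u+q=|w|-q\ge$ — no. Let me use the right direction: $|w'|=u+q$; from $u\le q+1$, $|w'|\le 2q+1$; from $|w|=u+2q\ge q+q=2q$... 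The clean computation is $3|w'| = 3(u+q) = 2(u+2q) + (u - q) = 2|w| + (u-q) \le 2|w| + 1$, using $u \le q+1$, which gives exactly $|w'|\le \frac{2|w|+1}{3}$.

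\textbf{Free letters.} Finally, the count of free letters: every paired pair of free letters becomes a single fresh free letter, every unpaired free letter stays a free letter, and (by the factorisation claim) factors stay factors; so $\#\text{free}(w') = \#\text{free}(w) - (\text{number of pairs of free letters}) = \#\text{free}(w) - p$ where $p$ is the number of fresh letters introduced. I expect the main obstacle to be the bookkeeping in the factorisation-preservation step — specifically checking that the copied block $w[j'\twodots\,]$ in \algreplace{} really is the already-computed image of the factor's definition and that its start index $\pos[\pocz[i]]$ and end index line up with whole pairs; this is where \Pref{2}, \Pref{3}, and the technical assumption on definitions all have to be invoked together. Everything else is a routine invariant maintained along the single sweep.
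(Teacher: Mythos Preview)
Your plan is correct and matches the paper's proof almost step for step: linear time by a single sweep, the length bound from \Pref{1}, factorisation preservation from \Pref{2}+\Pref{3}, and the free-letter count by case analysis on where a pair sits. Your computation $3|w'| = 2|w| + (u-q) \le 2|w|+1$ is in fact a tidier version of the paper's ``at least $(|w|-1)/3$ letters are removed'' argument.

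Two small points. First, your justification that $\pos[\pocz[i]]$ is defined takes a detour through the technical assumption and the claim that the definition ``is itself the start of a factor''; the latter is not true in general (a factor's definition may begin at an arbitrary earlier position). The clean argument is simply: by \Pref{2} the first letter of $f$ is first in a pair, and by \Pref{3} the pairing of the definition coincides with that of $f$, so the first letter of the definition is first in a pair too, hence $\pos$ is defined there. Second, because \algreplace{} reuses the same tables for $w'$ as for $w$, the paper separately checks that when we clear $\pocz[i]$, $\kon[i-1]$ we never clobber freshly written entries of $w'$; this reduces to showing $i' < i$ once the first factor is reached (using \Pref{1}, \Pref{2}, and the $\pocz[i]\neq i-1$ assumption). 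You omit this, but it is an implementation detail rather than a gap in the mathematical content of the lemma.
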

\begin{proof}
The running time is obvious as we make one scan through $w$.

Firstly, we show that when we erase the information about beginnings and ends of factors of $w$
we do not erase the newly created information for $w'$.
To this end it is enough to show that $i > i'$ in such situation.
Whenever $i'$ is incremented, $i$ is incremented by at least the same amount, so it is enough to show that $i > i'$
when $i$ is the first letter of the first factor,
in other words, there is at least one pair before the first factor.
By \Pref{1} there is a pair within first three positions of the factor.
If the pair is at positions $1$, $2$ then by \Pref{2} the factor begins at position $3$ or later and we are done.
If the pair is at positions $2$, $3$ then by \Pref{2} the factor begins at position $4$ or later or at position $2$;
however, the latter case implies that the factor definition is $1$ to the left of the factor, which is excluded by \algpairing.

Concerning the size of the produced word, by \Pref{1} each unpaired letter (perhaps except the last letter of $w$) is followed by a pair.
Thus, at least $\frac{1}{3}(|w|-1)$ letters are removed from $w$, which yields the claim.

Concerning the factorisation of $w'$,
observe that by an easy induction it can be shown that for each $i$ the $\pos[i]$ is 
\begin{itemize}
	\item undefined, when $i$ is second in a pair \textit{or}
	\item is the position of the corresponding letter in $w'$.
\end{itemize}

Now, consider any factor $f$ in $w$ with a definition $w[j \ldots j + |f| - 1]$.
By \Pref{2} both the first and the last two letters of $f$ are paired
and by \Pref{3} pairing of $f$ is the same as the pairing of its definition.
So it is enough to copy the letters in $w'$ corresponding to $w[j \ldots j + |f| - 1]$, i.e.\ beginning with $\pos[j]$,
which is what the algorithm does.
When we consider a free letter, if it is unpaired, it should be copied (as it is not replaced),
and when it is paired, the pair can be replaced with a fresh letter; in both cases the corresponding letter in the new word should be free.
And the algorithm does that.

Concerning the number of fresh letters introduced, suppose that $ab$ is replaced with $c$.
If $ab$ is within some factor $f$ then we use for the replacement the same letter as we use in the factor definition
and so no new fresh letter is introduced.
If both this $a$ and $b$ are free letters then each such a pair contributes one fresh letter.
And those two free letters are replaced with one free letter, hence the number of free letters decreases by $1$.
The last possibility is that one letter from $ab$ comes from a factor and the other from outside this factor,
but this contradicts \Pref{2} that a factor begins and ends with a pair.
\qedhere
\end{proof}

Using the two lemmata we can give the proof.

\begin{theorem}
\algmain{} runs in linear time and returns an SLP of size $\Ocomp(\ell + \ell \log(N/\ell))$.
Thus its approximation ration is $\Ocomp(1 + \log(N/g))$, where $g$ is the size of the optimal grammar.
\end{theorem}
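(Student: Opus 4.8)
The plan is to assemble the statement from Lemmata~\ref{lem: main} and~\ref{lem: pairing and trivial}. Note first that \algmain{} does produce an SLP: when a pair $ab$ is replaced by a fresh letter $c$ we record $c\to ab$, and since $a,b$ are letters of the current word they are either original terminals or fresh letters created in earlier phases, so the nonterminals admit a consistent ordering. The size of this SLP is the number of rules $c\to ab$ --- that is, the total number of fresh letters introduced by \algreplace{} over all phases --- plus one unary rule $X\to a$ per distinct terminal of the input; the latter count is at most $\ell$, because the first occurrence of any letter in $w$ must be a free letter of the LZ77 factorisation (a factor of length $\geq2$ copies an earlier block). So it suffices to bound the number of fresh letters and the running time.

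For the number of phases, Lemma~\ref{lem: pairing and trivial} gives $|w_{i+1}|-1\le\tfrac{2|w_i|+1}{3}-1=\tfrac23(|w_i|-1)$, hence $|w_i|-1\le(\tfrac23)^{i-1}(N-1)$ and \algmain{} halts after $\Ocomp(\log N)$ phases. The LZ77 factorisation is computed once in linear time (under the stated model), each phase takes time $\Ocomp(|w_i|)$ by the two lemmata, $\sum_i|w_i|=\Ocomp(N)$ as a geometric sum, and writing out the grammar costs time proportional to its size; so the algorithm is linear-time once the grammar is shown to have size $\Ocomp(N)$, which will follow.

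For the grammar size, let $r_i$ be the number of free letters of $w_i$ at the start of phase $i$, let $q_i$ be the number of free letters that \algpairing{} creates in phase $i$, and let $p_i$ be the number of fresh letters that \algreplace{} introduces in phase $i$. The word entering \algreplace{} in phase $i$ has exactly $r_i+q_i$ free letters, so Lemma~\ref{lem: pairing and trivial} gives $r_{i+1}=r_i+q_i-p_i$; summing and telescoping, $\sum_i p_i=\sum_i(r_i-r_{i+1})+\sum_i q_i\le r_1+\sum_i q_i\le\ell+\sum_i q_i$, the last step because $w=f_1\cdots f_\ell$ has at most $\ell$ pieces. It remains to bound $\sum_i q_i$. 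The crude bound $q_i\le6m\le6\ell$ ($m$ the current number of factors, Lemma~\ref{lem: main}) only yields $\Ocomp(\ell\log N)$; instead I would charge the at most $6$ free letters that \algpairing{} attributes to a factor in a phase --- which is precisely what the proof of Lemma~\ref{lem: main} establishes per factor --- to that factor's lineage through the phases. No phase ever creates a new factor (both lemmata state that the number of factors cannot increase), so there are at most $\ell$ lineages, with initial lengths $|f_1|,\dots,|f_\ell|$ and $\sum_j|f_j|\le N$. The key observation is that in one phase a factor of length $k$ becomes a factor of length at most $\tfrac23 k$: \algpairing{} only shortens it, and then inside a single factor, where \Pref{1} and \Pref{2} hold, the count from the proof of Lemma~\ref{lem: pairing and trivial} shows \algreplace{} removes at least a third of its letters (any additive loss in \algpairing{} only helps). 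Hence the lineage of $f_j$ is demoted to a free letter after $\Ocomp(\log|f_j|)$ phases and stops contributing, so $\sum_i q_i=\Ocomp\!\left(\sum_{j=1}^{\ell}(1+\log|f_j|)\right)=\Ocomp(\ell+\ell\log(N/\ell))$ by concavity of $\log$ and $\sum_j|f_j|\le N$. Therefore the SLP has size $\Ocomp(\ell+\ell\log(N/\ell))=\Ocomp(N)$, which also completes the running-time bound.

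For the approximation ratio, take $\log$ to be the natural logarithm (this changes only the hidden constant); then $h(x):=x\bigl(1+\log(N/x)\bigr)$ has $h'(x)=\log(N/x)\ge0$ on $[1,N]$, so $h$ is non-decreasing there, and since $\ell\le g\le N$ the grammar size, which is $\Ocomp(h(\ell))$ with $h(\ell)\le h(g)$, is $\Ocomp(h(g))=\Ocomp\bigl(g(1+\log(N/g))\bigr)$; dividing by $g$ gives ratio $\Ocomp(1+\log(N/g))$. The main obstacle is the estimate on $\sum_i q_i$: applying Lemma~\ref{lem: main} once per phase wastes a $\log N$ factor, so one must pass to the per-factor amortisation, pin down the clean geometric shrinkage $k\mapsto\le\tfrac23 k$ per phase, and combine it with the telescoping identity $p_i=r_i+q_i-r_{i+1}$ that converts ``free letters ever created'' into ``number of grammar rules''.
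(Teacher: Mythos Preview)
Your proof is correct and follows essentially the same route as the paper: bound the number of fresh letters by the total number of free letters ever created (your telescoping $r_{i+1}=r_i+q_i-p_i$ is exactly the paper's observation that each fresh letter removes one free letter), charge the free letters created in a phase to the at most $\ell$ factor lineages at $\Ocomp(1)$ per phase, bound the lifetime of a lineage by $\Ocomp(\log|f_j|)$ via the geometric shrinkage from \Pref{1}--\Pref{2}, and finish with concavity of $\log$ and monotonicity of $x\mapsto x(1+\log(N/x))$. Your write-up is more explicit in places (the telescoping identity, the derivative check for monotonicity, the unary terminal rules), but there is no substantive difference in strategy.
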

\begin{proof}
Due to Lemma~\ref{lem: pairing and trivial} each introduction of a fresh letter reduces the number of free letters by $1$.
Thus to bound the number of different introduced letters it is enough to estimate the number of created free letters.
In the initial LZ77 factorisation there are at most $\ell$ of them.
For the free letters created during the \algpairing{} let us fix a factor $f$ of the original factorisation
and estimate how many free letters it created.
Due to \Pref{1} the length of $f$ drops by a constant fraction in each phase and so it will take part in $\Ocomp(\log |f|)$ phases.
In each phase it can introduce at most $6$ free letters, by Lemma~\ref{lem: main}.
So $\Ocomp(\sum_{i=1}^\ell \log |f_i|)$ free letters were introduced to the word during all phases.
Consider $\sum_{i=1}^\ell \log |f_i|$ under the constraint $\sum_{i=1}^\ell |f_i| \leq N$.
As function $h(x) = \log(x)$ is concave, we conclude that this is maximised for all $f_i$ being equal to $N/\ell$.
Hence the number of nonterminals in the grammar introduced in this way is $\Ocomp(\ell \log(N /\ell))$.
Adding the $\ell$ for the free letters in the LZ77 factorisations yields the claim.

Concerning the running time, the creation of the LZ77 factorisation takes linear time~\cite{CrochemoreLZ77,PuglisiLZ77,CrochemoreLZ772,BannaiLZ77,KarkkainenLZ77,OhlebuschLZ77}.
In each phase the pairing and replacement of pairs takes linear time in the length of the current word.
Thanks to \Pref{1} the length of such a word is reduced by a constant fraction in each phase,
hence the total running time is linear.
\qedhere
\end{proof}

%%%\bibliographystyle{plain}
%%%\bibliography{references}
\end{document}